\documentclass[sigconf,nonacm]{acmart}

\usepackage{algorithm}
\usepackage{algpseudocode}
\usepackage{balance}
\usepackage{tikz}
\usetikzlibrary{arrows.meta,positioning,patterns,calc}
\makeatletter
\@ACM@balancefalse
\makeatother
\algrenewcommand\algorithmicrequire{\textbf{Input:}}
\algrenewcommand\algorithmicensure{\textbf{Output:}}

\setcopyright{none}
\title[Structure-Aware SAT--CEGAR for HCP]{\textsc{StructHCP}: Structure-Aware SAT--CEGAR for the Hamiltonian Cycle Problem}

\author{Quan Van Vi}
\affiliation{
  \institution{Vietnam National University, Hanoi} \city{Hanoi} \country{Vietnam}} \email{22021149@vnu.edu.vn}

\author{Tuyen Van Kieu}
\affiliation{
  \institution{Vietnam National University, Hanoi} \city{Hanoi} \country{Vietnam}} \email{tuyenkv@vnu.edu.vn}

\author{Khanh Van To}
\affiliation{
  \institution{Vietnam National University, Hanoi} \city{Hanoi} \country{Vietnam}} \email{khanhtv@vnu.edu.vn}

\newcommand{\BibTeX}{\rm B\kern-.05em{\sc i\kern-.025em b}\kern-.08em\TeX}

\newcommand{\method}{\textsc{StructHCP}}

\begin{abstract}
Many exact Hamiltonian-cycle solvers use the same edge-level representation for every input, even when graph structure admits a smaller set of global decisions. We present \method, a SAT--CEGAR pipeline that selects a representation through inexpensive structural recognizers. Specialized searches handle high-degree clusters and paired chains, while complete directed and undirected encodings handle cubic and general graphs. Every returned cycle is reconstructed when necessary and verified in the input graph. On the 1,001-instance FHCP Challenge Set, \method{} solves 826 instances, compared with 741 when the same general SAT--CEGAR engine is applied directly to every unreduced input graph. Relative to this ablation, \method{} reduces PAR-2 by 32.9\%. Formulation selection solves 82 more instances than the contracted general encoding; high-degree-cluster and cubic graphs account for 81 of this difference. \method{} also solves 34 of the 51 instances solved by neither HCup nor Cut-set CEGAR, raising the combined coverage of the three solvers from 950 to 984. Twelve of its solutions lie outside the union of published per-instance SAT results analyzed here, and six also lie outside the published exact result sets in our comparison. These results show that selecting the representation can complement connectivity refinement on structured HCP instances.
\end{abstract}

\ccsdesc[500]{Computing methodologies~Discrete space search}
\keywords{Hamiltonian cycle, SAT, CEGAR, algorithm selection}

\begin{document}
\hypersetup{
  pdftitle={StructHCP: Structure-Aware SAT--CEGAR for the Hamiltonian Cycle Problem},
  pdfauthor={Quan Van Vi; Tuyen Van Kieu; Khanh Van To},
  pdfsubject={Exact SAT solving for the Hamiltonian cycle problem},
  pdfkeywords={Hamiltonian cycle, SAT, CEGAR, algorithm selection}
}
\raggedbottom
\maketitle

\section{Introduction}
\label{sec:introduction}

The Hamiltonian cycle problem (HCP) asks whether an undirected graph contains a simple cycle that visits every vertex exactly once. It is a classical NP-complete problem~\cite{karp1972} and a basic model for exact graph search. HCP also exposes a central difficulty in combinatorial reasoning: local constraints are easy to state, whereas global connectivity is not. In the usual edge-based SAT encoding, exactly two incident edges are selected at every vertex. These constraints describe a cycle cover, which may contain several disjoint cycles rather than one Hamiltonian cycle.

Counterexample-guided abstraction refinement (CEGAR) was introduced for model checking~\cite{clarke2000cegar}. Soh et al. later applied this idea to HCP: their 2014 incremental SAT method starts with degree constraints and adds connectivity clauses when the current solution contains disconnected cycles~\cite{soh2014incremental}. Ohashi et al. subsequently strengthened the refinement step with cut-set constraints, while HCup combines cut-set clauses, cycle merging, and user propagators~\cite{ohashi2025cutset,fazekas2024userpropagators,funder2026hcup}. These techniques improve connectivity reasoning while encoding the input graph directly. A single representation can therefore expose very different search spaces across graph families. Degree-two paths contain forced traversals, cubic graphs admit directed choices, and clustered graphs may be governed by a much smaller set of connections between clusters.

This variation suggests treating the representation as part of the solving decision. Structural decomposition and vertex elimination already show how graph transformations can preserve Hamiltonicity~\cite{coudert2017defi,zhou2024vertex}. More generally, algorithm selection uses inexpensive instance features to exploit complementary procedures, and encoding selection has been studied for directed Hamiltonian cycles in ASP~\cite{kerschke2019algorithmselection,liu2019encodingselection}. The unresolved question is whether one exact SAT pipeline can select structurally appropriate formulations while retaining a complete decision path. It must also preserve enough information to reconstruct every cycle in the input graph.

We address this question with \method{}, a structure-aware SAT--CEGAR solver. The solver applies inexpensive recognizers in a fixed order. Compact searches handle high-degree clusters and paired chains, a directed encoding handles cubic graphs, and a general undirected encoding handles all remaining inputs. The general branch first contracts maximal degree-two paths and records each path behind a required virtual edge. Candidate-producing branches fall back to the complete general branch when they do not find a cycle. Every returned cycle is reconstructed when necessary and verified in the input graph.

The four branches expose different global decisions to the SAT solver. The high-degree formulation selects connections between clusters rather than all internal edges. The paired-chain formulation selects orientations and transitions between contracted pairs. The cubic formulation chooses one predecessor and one successor per vertex, while the general formulation retains undirected edge variables after forced paths have been removed. The principal contribution is an exact solver architecture in which graph structure determines the representation before connectivity refinement begins.

We make three contributions. First, we integrate these four representations into one exact pipeline with complete cubic and general branches. Second, we formalize degree-two path contraction through required virtual edges and an explicit reconstruction map, and establish end-to-end correctness. The same framework supports separation-pair reduction as an optional extension whose empirical contribution is evaluated separately. Third, we provide a family-level ablation and complementary-coverage analysis on the 1,001-instance FHCP Challenge Set. \method{} solves 826 instances, compared with 741 when the same general SAT--CEGAR engine is used without formulation selection or degree-two contraction. It also solves 34 instances outside the union of two recent SAT-based reruns. A separate labelled cubic census evaluates both Hamiltonian-cycle construction and UNSAT conclusions.

\section{Background and Related Work}
\label{sec:background}

\subsection{HCP and SAT--CEGAR}
\label{sec:preliminaries}

Let $G=(V,E)$ be a finite undirected simple graph, with $n=|V|\geq3$. For $S\subseteq V$, let $\delta_G(S)$ be the cut containing the edges with exactly one endpoint in $S$, and write $\delta_G(v)=\delta_G(\{v\})$ for the edges incident to vertex $v$. Each edge $e\in E$ has a Boolean variable $x_e$, where $x_e=1$ means that the candidate cycle uses $e$. The degree formula
\begin{equation}
  \Phi_{\mathrm{deg}}(G)=
  \bigwedge_{v\in V}
  \left(\sum_{e\in\delta_G(v)}x_e=2\right)
  \label{eq:degree}
\end{equation}
is a Boolean cardinality constraint: for each vertex, exactly two incident edge variables must be true. Consequently, every satisfying assignment describes a collection of vertex-disjoint cycles covering $V$. It represents a Hamiltonian cycle exactly when this collection has one component.

For every nonempty proper subset $S\subset V$, a Hamiltonian cycle must enter and leave $S$. At least two selected edges therefore cross the cut:
\begin{equation}
  \sum_{e\in\delta_G(S)}x_e\geq2.
  \label{eq:cutset}
\end{equation}
Under Equation~\eqref{eq:degree}, every selected cut has even cardinality. The Boolean clause $\bigvee_{e\in\delta_G(S)}x_e$ requires at least one selected crossing edge; parity then makes this equivalent to the lower bound in Equation~\eqref{eq:cutset}. Such cut constraints go back to the classical subtour-elimination formulation of Dantzig, Fulkerson, and Johnson~\cite{dantzig1954tsp}.

For a proper cycle $C$, let $E(C)$ denote its selected edges. The subcycle-blocking clause $\bigvee_{e\in E(C)}\neg x_e$ requires the next model to omit at least one edge of $C$. CEGAR starts from the degree formula, decodes each model into cycles, and adds valid connectivity clauses for every proper cycle. It stops when the model is one spanning cycle or the refined formula becomes unsatisfiable.

For a vertex set $M\subseteq V$ and distinct endpoints $s,t\in M$, a spanning $s$--$t$ path uses one selected edge at $s$ and $t$ and two at every vertex in $M\setminus\{s,t\}$. This degree relaxation may still contain the endpoint path together with disjoint cycles, so the same refinement principle removes those cycles. The structural solvers use this formulation as their local path primitive.

\subsection{Exact HCP Solving}
\label{sec:related-approaches}

\paragraph{SAT encodings and refinement.}
Exact SAT methods differ in when they enforce the condition that the selected edges form one cycle. Static encodings express connectivity before search through arithmetic or ordering constraints~\cite{heule2021chinese,zhou2020encoding}. Soh et al. introduced SAT--CEGAR for HCP in 2014~\cite{soh2014incremental}. Their method starts with degree constraints and adds blocking clauses whenever the current solution contains disconnected cycles. Ohashi et al. later strengthened this refinement step by replacing subcycle-blocking clauses with cut-set constraints~\cite{ohashi2025cutset}. HCup supports complete-model refinement and user propagators that detect or prevent subcycles during search~\cite{fazekas2024userpropagators,funder2026hcup}. These methods encode the input graph directly in their main SAT formulations.

\paragraph{Structural reductions.}
Triconnected decomposition identifies separation pairs, and SPQR trees organize the resulting components~\cite{hopcroft1973triconnected,dibattista1996spqr}. Cohen and Coudert use such pairs to split an FHCP graph into Hamiltonian $u$--$v$ path subproblems and join the paths afterwards~\cite{coudert2017defi}. Zhou's vertex-elimination encoding relates a cycle in a reduced graph to one in the original graph through additional SAT constraints~\cite{zhou2024vertex}. \method{} contracts forced degree-two paths; we evaluate separation-pair reduction as an optional variant.

Structural constraints have also been integrated directly into TSP models~\cite{isoart2019structural}, while vertex-elimination encodings provide alternative reachability and acyclicity representations~\cite{rankooh2022elimination,zhou2024vertex}. Earlier HCP experiments likewise showed that variable dependencies can make one SAT model unsuitable across instances and motivated alternative models~\cite{prestwich2003chains}. Our reductions differ in that every removed degree-two path becomes a required representative with an explicit reconstruction map.

\paragraph{Representation and algorithm selection.}
Per-instance selection exploits complementary algorithms using cheaply computed instance features~\cite{kerschke2019algorithmselection}. Most closely, Liu and Truszczynski learn which of six ASP encodings to use for a directed Hamiltonian-cycle instance~\cite{liu2019encodingselection}. \method{} uses deterministic graph recognizers to select both the graph abstraction and the SAT variables. Complete branches handle inputs for which the compact candidate searches find no cycle.

\paragraph{Other exact methods.}
Dynamic programming provides another representation of the global connectivity constraint. Ziobro and Pilipczuk~\cite{ziobro2018treewidth} maintain partial paths along a tree decomposition, while Fieger~\cite{fieger2018parallel} combines exact longest-path states across graph partitions. Their results extend the comparison beyond SAT-based methods.

\paragraph{Empirical comparison.}
The closest SAT solvers differ mainly in how they enforce connectivity while encoding the input graph directly. Static encodings impose connectivity before search, SAT--CEGAR learns it from disconnected cycle covers, and HCup can intervene during partial assignments. \method{} instead changes the representation before the main search. We compare solved-set coverage with reruns of Cut-set CEGAR and HCup and use matched internal ablations to measure the effect of formulation selection~\cite{ohashi2025cutset,funder2026hcup,ziobro2018treewidth}.

\section{Method}
\label{sec:method}

\subsection{Overview}
\label{sec:overview}

\method{} first checks inexpensive necessary conditions and then chooses a SAT formulation from the graph structure. Specialized solvers handle graphs with high-degree clusters or a paired-chain structure. Cubic graphs use a directed encoding. The remaining graphs use an undirected edge encoding after maximal degree-two paths have been contracted. If a specialized solver finds no cycle, the general solver takes over.

Degree-two path contraction produces a reduced graph $H$, a set $F$ of required virtual edges, and a reconstruction map $\rho$. For a contracted path $P$, the virtual edge $\epsilon_P\in F$ is fixed to true in the SAT formula and $\rho(\epsilon_P)=P$ stores the ordered path that it represents. Replacing each virtual edge by its stored path maps a cycle in $H$ back to a cycle in the input graph.

Figure~\ref{fig:pipeline} shows the four formulations: a cluster formula, a directed pair graph, a directed-arc formula for cubic graphs, and a contracted undirected formula for the remaining inputs. Every successful route ultimately returns a cycle verified in the original graph; the complete cubic and general routes can also prove UNSAT.

\begin{figure*}[t]
  \centering
  \begin{tikzpicture}[
      x=1cm, y=1cm,
      >=Latex,
      flow/.style={draw=black!75, line width=0.55pt, rounded corners=3pt, fill=black!4, align=center,
        minimum height=0.68cm, font=\scriptsize\linespread{0.92}\selectfont},
      flowmain/.style={draw=black!90, line width=0.65pt, rounded corners=3pt, fill=black!7, align=center,
        minimum height=0.68cm, font=\scriptsize\bfseries\linespread{0.92}\selectfont},
      card/.style={draw=black!65, line width=0.55pt, rounded corners=4pt, fill=black!2,
        minimum width=3.80cm, minimum height=2.38cm},
      title/.style={font=\scriptsize\bfseries, align=center},
      badge/.style={font=\fontsize{6.5}{7.1}\selectfont\itshape, text=black!75, align=center},
      note/.style={font=\fontsize{6.5}{7.1}\selectfont, align=center, text width=3.55cm},
      vertex/.style={circle, fill=black!85, inner sep=1.2pt},
      hubvertex/.style={circle, fill=black, inner sep=1.8pt},
      edge/.style={line width=0.5pt, draw=black!75},
      path/.style={->, line width=0.65pt, draw=black!85},
      optbox/.style={draw=black!60, dashed, rounded corners=2pt, fill=white,
        font=\fontsize{6.5}{7.1}\selectfont, align=center, inner sep=1.8pt}
    ]

    \node[card] (c1) at (2.15, 1.25) {};
    \node[title] at (2.15, 2.22) {High-Degree Cluster};
    \node[badge] at (2.15, 2.00) {Structure-Specific Search};
    \node[hubvertex] (h_hub) at (1.25, 1.48) {};
    \node[vertex] (h_c1) at (2.10, 1.76) {};
    \node[vertex] (h_c2) at (2.10, 1.20) {};
    \node[vertex] (h_c3) at (2.85, 1.76) {};
    \node[vertex] (h_c4) at (2.85, 1.20) {};
    \draw[edge] (h_hub)--(h_c1) (h_hub)--(h_c2) (h_hub)--(h_c3) (h_hub)--(h_c4);
    \draw[edge] (h_c1)--(h_c2)--(h_c4)--(h_c3)--(h_c1) (h_c1)--(h_c4) (h_c2)--(h_c3);
    \node[note] at (2.15, 0.54) {Choose connections between clusters;\\find one spanning path per cluster};

    \node[card] (c2) at (6.45, 1.25) {};
    \node[title] at (6.45, 2.22) {Paired-Chain Structure};
    \node[badge] at (6.45, 2.00) {Structure-Specific Search};
    \node[vertex] (p_u1) at (5.65, 1.58) {};
    \node[vertex] (p_u2) at (6.45, 1.58) {};
    \node[vertex] (p_u3) at (7.25, 1.58) {};
    \node[vertex] (p_v1) at (5.65, 1.20) {};
    \node[vertex] (p_v2) at (6.45, 1.20) {};
    \node[vertex] (p_v3) at (7.25, 1.20) {};
    \draw[edge] (p_u1)--(p_v1) (p_u2)--(p_v2) (p_u3)--(p_v3);
    \draw[edge] (p_u1)--(p_u2)--(p_u3) (p_v1)--(p_v2)--(p_v3);
    \node[note] at (6.45, 0.54) {Directed pair graph with\\lazy directed cycle cuts};

    \node[card] (c3) at (10.75, 1.25) {};
    \node[title] at (10.75, 2.22) {Cubic Graph};
    \node[badge] at (10.75, 2.00) {Complete Directed Formulation};
    \node[vertex] (k_1) at (10.15, 1.60) {};
    \node[vertex] (k_2) at (11.35, 1.60) {};
    \node[vertex] (k_3) at (10.15, 1.18) {};
    \node[vertex] (k_4) at (11.35, 1.18) {};
    \draw[edge] (k_1)--(k_2)--(k_4)--(k_3)--(k_1) (k_1)--(k_4) (k_2)--(k_3);
    \node[note] at (10.75, 0.54) {Directed-arc SAT--CEGAR\\(exact 1-in/1-out constraints)};

    \node[card] (c4) at (15.05, 1.25) {};
    \node[title] at (15.05, 2.22) {General Graph};
    \node[badge] at (15.05, 2.00) {Complete Undirected Formulation};
    \node[vertex] (g_1) at (14.05, 1.62) {};
    \node[vertex] (g_2) at (14.70, 1.62) {};
    \node[vertex] (g_3) at (15.35, 1.62) {};
    \node[vertex] (g_4) at (16.00, 1.62) {};
    \draw[edge] (g_1)--(g_2)--(g_3)--(g_4);
    \draw[edge] (g_1)--(13.80, 1.82) (g_1)--(13.80, 1.42)
               (g_4)--(16.25, 1.82) (g_4)--(16.25, 1.42);
    \node[note] at (15.05, 1.14) {Degree-two path contraction\\(virtual edges $F$; reconstruction $\rho$)};
    \node[optbox, text width=3.45cm] (opt_twocut) at (15.05, 0.52)
      {Optional separation-pair reduction\\(ablation only)};

    \node[flowmain, minimum width=16.7cm, minimum height=0.64cm, anchor=south] (select) at ($(c1.north)+(6.45, 0.22)$) {};
    \node[font=\scriptsize\bfseries] at ($(select.center)+(0, 0.15)$) {Select the First Applicable Route};
    \node[font=\fontsize{6.5}{7.1}\selectfont\bfseries, text=black!90] at ($(select.center)+(-6.45, -0.13)$) {(1) High-Degree Clusters};
    \node[font=\fontsize{6.5}{7.1}\selectfont, text=black!55] at ($(select.center)+(-4.30, -0.13)$) {$\longrightarrow$};
    \node[font=\fontsize{6.5}{7.1}\selectfont\bfseries, text=black!90] at ($(select.center)+(-2.15, -0.13)$) {(2) Paired Chains};
    \node[font=\fontsize{6.5}{7.1}\selectfont, text=black!55] at ($(select.center)+(0, -0.13)$) {$\longrightarrow$};
    \node[font=\fontsize{6.5}{7.1}\selectfont\bfseries, text=black!90] at ($(select.center)+(2.15, -0.13)$) {(3) Cubic Graph};
    \node[font=\fontsize{6.5}{7.1}\selectfont, text=black!55] at ($(select.center)+(4.30, -0.13)$) {$\longrightarrow$};
    \node[font=\fontsize{6.5}{7.1}\selectfont\bfseries, text=black!90] at ($(select.center)+(6.45, -0.13)$) {(4) General Graph (Default)};

    \node[flow, minimum width=10.20cm, text width=9.95cm, anchor=south] (filter) at ($(select.north)+(1.10, 0.38)$)
      {\textbf{Necessary-Condition Checks}\\
       {\fontsize{6.5}{7.1}\selectfont connected; $\delta(G)\geq2$; biconnected; balanced if bipartite $\;\longrightarrow\;$ else \textsc{unsat}}};
    \node[flowmain, minimum width=2.80cm, anchor=south] (input) at ($(select.north)+(-6.00, 0.38)$) {Input Graph $G$};

    \draw[path] (input) -- (filter);
    \draw[path] (8.60, 0 |- filter.south) -- (8.60, 0 |- select.north)
      node[pos=0.50, right, font=\fontsize{6.5}{7.1}\selectfont\bfseries] {passes checks};

    \draw[path] (c1.north |- select.south) -- (c1.north);
    \draw[path] (c2.north |- select.south) -- (c2.north);
    \draw[path] (c3.north |- select.south) -- (c3.north);
    \draw[path] (c4.north |- select.south) -- (c4.north);

    \node[flow, minimum width=7.30cm, minimum height=0.78cm, anchor=north] (special_stage) at ($(c1.south)+(2.15, -0.22)$)
      {\textbf{Structure-Specific Search \& Path Expansion}\\{\fontsize{6.5}{7.1}\selectfont Returns a candidate; otherwise uses the general branch}};

    \node[flow, minimum width=7.30cm, minimum height=0.78cm, anchor=north] (complete_stage) at ($(c1.south)+(10.75, -0.22)$)
      {\textbf{Complete Incremental SAT--CEGAR Solving}\\{\fontsize{6.5}{7.1}\selectfont \textbf{Cubic:} directed CEGAR $\;\mid\;$ \textbf{General:} undirected CEGAR + restore $\rho$}};

    \draw[dashed, ->, line width=0.55pt, draw=black!75] (special_stage.east) --
      node[above, font=\fontsize{6.5}{7.1}\selectfont\bfseries\sffamily, yshift=0.5pt] {\textsc{fallback}} (complete_stage.west);

    \draw[path] (c1.south) -- (c1.south |- special_stage.north);
    \draw[path] (c2.south) -- (c2.south |- special_stage.north);
    \draw[path] (c3.south) -- (c3.south |- complete_stage.north);
    \draw[path] (c4.south) -- (c4.south |- complete_stage.north);

    \node[flowmain, fill=black!8, minimum width=7.30cm, minimum height=0.74cm, anchor=north] (verified) at ($(special_stage.south)+(0, -0.72)$)
      {\textbf{Candidate Verification on Input Graph $G$}\\{\scriptsize Every vertex appears once and every cycle edge belongs to $G$ $\;\longrightarrow\;$ \textbf{Hamiltonian Cycle}}};

    \node[flowmain, fill=black!8, minimum width=7.30cm, minimum height=0.74cm, anchor=north] (decision) at ($(complete_stage.south)+(0, -0.72)$)
      {\textbf{Complete-Branch Exact Decision}\\{\scriptsize \textsc{unsat} $\;\mid\;$ \textsc{unknown} at the resource limit}};

    \coordinate (bus_pt) at ($(special_stage.south)!0.5!(verified.north)$);
    \draw[line width=0.65pt, draw=black!85] (special_stage.south) -- (bus_pt)
      node[pos=0.50, left, font=\fontsize{6.5}{7.1}\selectfont\bfseries] {expanded cycle};

    \draw[line width=0.65pt, draw=black!85] ($(complete_stage.south)+(-1.70,0)$) -- ++(0,-0.36) -- (bus_pt);
    \node[font=\fontsize{6.5}{7.1}\selectfont\bfseries, fill=white, inner sep=2pt] at ($(bus_pt)+(3.45, 0)$) {candidate cycle (Cubic / restored $\rho$)};

    \draw[path] (bus_pt) -- (verified.north);
    \draw[path] (complete_stage.south) -- (decision.north);

  \end{tikzpicture}
  \caption{Overview of \method{}. After necessary-condition checks, the solver follows the first applicable structural route. The high-degree-cluster and paired-chain routes construct and expand candidate cycles; if neither succeeds, the complete general branch continues the search. Cubic graphs use directed SAT--CEGAR without reconstruction. Other graphs undergo degree-two path contraction, undirected SAT--CEGAR, and path reconstruction via $\rho$. Every candidate is verified on the input graph. Separation-pair reduction appears as an optional ablation. Complete branches decide \textsc{unsat} or report \textsc{unknown} at the resource limit.}
  \label{fig:pipeline}
  \Description{A four-tier pipeline maps an input graph through necessary-condition checks and an ordered structure recognizer to high-degree-cluster, paired-chain, cubic, or general solving routes. The first two routes can produce a candidate cycle but otherwise fall back to the complete general branch. The cubic route uses directed SAT-CEGAR without reconstruction. The general route contracts degree-two paths, uses undirected SAT-CEGAR, and restores paths. Every candidate cycle is verified in the input graph. Complete branches may return UNSAT or UNKNOWN under a resource limit. Separation-pair reduction appears only as a disabled, ablation-only option.}
\end{figure*}

\subsection{Graph Reductions}
\label{sec:reductions}

\paragraph{Degree-two path contraction.}
Let $P=(u,v_1,\ldots,v_r,w)$ be a maximal path whose internal vertices have degree two. Every Hamiltonian cycle contains both incident edges of each $v_i$ and therefore traverses $P$ from $u$ to $w$. We replace the internal vertices by a representative edge $\epsilon_P=(u,w)$, require $x_{\epsilon_P}=1$, and store the ordered internal path in a map $\rho$. Connected degree-two components that form an entire cycle are reduced to a triangle rather than below three vertices; self-loop-producing contractions are rejected.

Contraction may create several edges with the same endpoints. A direct shortcut is dominated by the forced path because selecting both would close a proper cycle. When distinct forced paths share an endpoint pair, reconstruction checks the expanded cycle before acceptance. An incomplete expansion is discarded and the search continues on the original graph.

\begin{proposition}[Degree-two path contraction]
\label{prop:series}
Suppose each contracted path has a unique endpoint key. Let $H$ be the resulting graph and $F$ its required representative edges. Then $G$ has a Hamiltonian cycle if and only if $H$ has one containing every edge in $F$.
\end{proposition}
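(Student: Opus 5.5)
The plan is to prove both directions through the reconstruction map $\rho$, which is a bijection between the edge sets of Hamiltonian cycles of $G$ and those of $H$ that contain $F$. Write $\mathcal{P}$ for the family of contracted maximal paths. The internal vertex sets $\mathrm{int}(P)$ are pairwise disjoint, and each $\epsilon_P$ has endpoints $u_P\neq w_P$, because self-loop-producing contractions are rejected. The unique-key hypothesis says the pairs $\{u_P,w_P\}$ are distinct across $\mathcal{P}$. Hence $\epsilon_P$ is the only virtual edge on its endpoint pair, and $\rho$ is well defined on $F$. I would first record the structural facts. $V(H)=V(G)\setminus\bigcup_P \mathrm{int}(P)$. $E(H)$ consists of the edges of $G$ with no internal endpoint, together with $F$. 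Since $P$ is maximal, its endpoints are not internal vertices of any contracted path, so they survive in $H$.

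\emph{($\Rightarrow$)} Let $C$ be a Hamiltonian cycle of $G$. Each internal vertex $v_i$ of $P$ has degree two, so $C$ uses both of its edges. Therefore $C$ contains the whole path $P$ as a subpath from $u_P$ to $w_P$. Replacing each such subpath by $\epsilon_P$ gives a closed walk $C'$ in $H$. $C'$ visits every surviving vertex exactly once, because the removed vertices are exactly the internal ones, and every edge of $C'$ lies in $E(H)$. It is a cycle provided $|V(H)|\ge3$. The triangle rule for a full degree-two cycle component ensures this; in that case $G$ is itself a cycle, and I would treat it as a trivial separate case. Finally, $C'\supseteq F$ by construction.

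\emph{($\Leftarrow$)} Let $C'$ be a Hamiltonian cycle of $H$ with $F\subseteq E(C')$. Replace each $\epsilon_P$ by $\rho(\epsilon_P)=P$. Because $\epsilon_P$ is the unique edge with its key, the replaced edge is unambiguously the virtual one and not a parallel original edge. The result is a closed walk in $G$. Each vertex of $H$ is still visited once. Each internal vertex of $P$ is visited exactly once, since $\epsilon_P$ occurs once in $C'$ and the sets $\mathrm{int}(P)$ are disjoint. So every vertex of $G$ is covered exactly once and the walk is a Hamiltonian cycle of $G$. The two operations are mutually inverse, which gives the bijection.

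The main obstacle is multigraph bookkeeping. I would treat $H$ as a multigraph in which $\epsilon_P$ is a distinct edge object, even when an original edge $u_Pw_P$ exists. Then I would check two things. First, a Hamiltonian cycle in $H$ cannot use both $\epsilon_P$ and a parallel original edge, since that would form a 2-cycle. This keeps the cycle notion consistent when $|V(H)|\ge3$. Second, the unique-key hypothesis is exactly what rules out two virtual edges on the same pair, which is the case where the expansion could be ambiguous.
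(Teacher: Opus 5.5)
Your proof is correct and follows the paper's argument. In the forward direction you contract the forced traversals of a Hamiltonian cycle of $G$ to get one of $H$ through $F$, with the parallel shortcut excluded because using it would close a proper cycle; in the converse you expand each $\epsilon_P$ via $\rho$ and use that the stored paths are internally disjoint. Your version is simply more careful about the multigraph bookkeeping, the requirement $|V(H)|\ge 3$, and the whole-cycle case, and it correctly observes that internal disjointness already follows from maximality, with the unique-key hypothesis serving to make $\rho$ well defined.
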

\begin{proof}
Contracting each forced traversal in a Hamiltonian cycle of $G$ yields a cycle of $H$ containing $F$; a coincident original shortcut is never selected by the argument above. Conversely, unique endpoint keys make the paths stored by $\rho$ internally disjoint. Replacing each representative with its oriented path therefore visits every removed vertex once and uses only edges of $G$.
\end{proof}

\paragraph{Optional separation-pair reduction.}
The optional variant identifies separation pairs $\{u,v\}$ by computing the articulation vertices of $G-\{u\}$ after degree-two contraction. If deleting $u$ and $v$ leaves more than two components, the graph is non-Hamiltonian because deleting two vertices from a cycle leaves at most two paths. When exactly two components remain, the smaller component is tried first as a spanning $u$--$v$ path subproblem.

\begin{proposition}[Separation-pair equivalence]
\label{prop:twocut}
If $G-\{u,v\}$ has exactly two components $C_1,C_2$, then $G$ is Hamiltonian if and only if each $G[C_i\cup\{u,v\}]$ has a spanning $u$--$v$ path. A solved path through one component can therefore be replaced by a required virtual edge $(u,v)$.
\end{proposition}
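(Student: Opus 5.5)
The plan is to argue directly from how a Hamiltonian cycle decomposes when the two separating vertices are removed, much as in the proof of Proposition~\ref{prop:series}. Both components $C_1,C_2$ are nonempty, and $G$ has no edge between $C_1$ and $C_2$. These two facts carry most of the argument. I would prove the equivalence first and then derive the virtual-edge statement from it.

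For the forward direction, take a Hamiltonian cycle $Z$ of $G$. Deleting $u$ and $v$ from $Z$ leaves at most two vertex-disjoint paths (arcs), and together they cover $C_1\cup C_2$. Each arc is connected in $G-\{u,v\}$, so it lies entirely inside one component. Since both components are nonempty, there must be exactly two arcs with one in each component. This forces $u$ and $v$ to be non-adjacent on $Z$, so the edge $uv$, if it exists in $G$, is not used. Hence $Z=u\,P_1\,v\,P_2\,u$ with $V(P_i)=C_i$. Then $u\,P_i\,v$ is a spanning $u$--$v$ path of $G[C_i\cup\{u,v\}]$.

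For the converse, take spanning $u$--$v$ paths $Q_1,Q_2$ of $G[C_1\cup\{u,v\}]$ and $G[C_2\cup\{u,v\}]$. Their interiors are exactly $C_1$ and $C_2$, which are disjoint and nonempty. So neither path is the single edge $uv$, and their union is a simple cycle through all of $V$ using only edges of $G$.

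For the virtual-edge claim, suppose a spanning $u$--$v$ path $Q_1$ through $C_1\cup\{u,v\}$ has been found. Form $H=G[C_2\cup\{u,v\}]$ plus a new required edge $\epsilon=(u,v)$ with $\rho(\epsilon)=Q_1$; this edge is parallel to $uv$ if $uv\in E$. A Hamiltonian cycle of $H$ containing $\epsilon$, minus $\epsilon$, is a spanning $u$--$v$ path of $G[C_2\cup\{u,v\}]$. Conversely, any such path has length at least $2$ because $C_2\neq\emptyset$, so it never uses the original edge $uv$, and adding $\epsilon$ closes it into a Hamiltonian cycle of $H$. Combined with the equivalence, and with expansion via $\rho$ exactly as in Proposition~\ref{prop:series}, this shows that $G$ is Hamiltonian if and only if $H$ has a Hamiltonian cycle through $\epsilon$, given that the path $Q_1$ exists.

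The step I expect to need the most care is the forward direction's counting: showing that the cycle splits into exactly one arc per component and that a possible original edge $uv$ can never appear in a Hamiltonian cycle. Both follow from nonemptiness of $C_1$ and $C_2$ together with the absence of $C_1$--$C_2$ edges, and I would state that explicitly rather than leave it implicit.
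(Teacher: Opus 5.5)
Your proof is correct and takes the same route as the paper's. You delete $u,v$ from a Hamiltonian cycle, use the absence of $C_1$--$C_2$ edges to confine each remaining arc to one component, and conversely join two internally disjoint spanning $u$--$v$ paths into a cycle. Your version is more careful than the paper's in two respects: you use nonemptiness of $C_1,C_2$ to rule out $u$ and $v$ being consecutive on the cycle, which the paper's ``leaves two vertex-disjoint paths'' silently assumes, and you verify the virtual-edge consequence, including the parallel $uv$ case, which the paper only asserts.
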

\begin{proof}
Removing $u$ and $v$ from a Hamiltonian cycle leaves two vertex-disjoint paths. Because no edge joins $C_1$ and $C_2$, each path lies in one component and has endpoints $u,v$. Conversely, the union of two internally disjoint spanning $u$--$v$ paths is a Hamiltonian cycle.
\end{proof}

A successful path solve removes the component $C$ and stores the path as $\rho((u,v))$. The reduced formula replaces the edge variables incident to $C$ with one required-edge variable. If no spanning path is found within the local search budget, the graph is left unchanged. Repeated successful reductions preserve Hamiltonicity by Propositions~\ref{prop:series} and~\ref{prop:twocut}, and stored paths are restored in reverse order. This option is disabled in the main configuration because its ablation reduces coverage.

\subsection{SAT Encodings and Formulation Selection}
\label{sec:engine}

The four branches in Figure~\ref{fig:pipeline} solve the same problem, but expose different choices to SAT. The first three branches use a compact representation for a recognizable graph family. The general branch keeps one variable for each undirected edge.

\paragraph{High-degree clusters.}
Let $Q=\{v:\deg(v)\geq400\}$ be the set of hubs. The recognizer applies when $|Q|$ is 5 or 10 and builds a deterministic partition in three steps. First, it identifies a low-degree connector corridor $R$ containing $Q$. A non-hub of degree at most four seeds this corridor when it has at least two hub neighbors, no hub neighbor, or at least two low-degree neighbors; its low-degree neighbors are included as connectors. Second, each remaining vertex is assigned to its unique adjacent hub when possible, or to the hub owning most of its neighbors. These assignments form bulks $K_h$. Finally, a vertex in $K_h$ becomes a port in $B_h$ when it connects to the corridor or another bulk.

The macro graph contains $R\cup\bigcup_h B_h$ and original edges whose endpoints do not share a bulk owner. It has an edge variable for every macro edge and a pair variable $p^h_{ab}$ for each $a,b\in B_h$: selecting $p^h_{ab}$ promises a spanning $a$--$b$ path through $K_h$. Exactly one pair is selected per bulk. Port clauses match a selected pair endpoint to one external macro edge; corridor vertices have macro degree two. CEGAR cut clauses join disconnected macro cycles. For each selected pair, an independent edge-level SAT call seeks the promised spanning path; a failed pair is blocked and the macro search continues. Only after all paths expand to a cycle accepted by the input-graph verifier does this route succeed. The route constructs candidates within fixed cut and iteration limits, while complete decisions remain with the cubic and general branches.

\paragraph{Paired chains.}
This recognizer first requires at least $15\%$ degree-two vertices and contracts their paths. Their endpoint pairs must partition all vertices of the contracted graph. In the auxiliary graph containing both pair edges and all remaining edges, a breadth-first two-coloring must place pair partners on opposite sides and every non-pair edge across the sides. Pair $i$ is then oriented from its color-0 endpoint $s_i$ to its color-1 endpoint $t_i$. A directed arc $a_{ij}$ exists exactly when the contracted graph contains the non-pair edge $(t_i,s_j)$.

The formula selects exactly one incoming and one outgoing arc at every pair, forbids selecting both $a_{ij}$ and $a_{ji}$, and initially blocks directed 3-cycles. Proper cycles found in a model generate internal-cycle nogoods and incoming/outgoing cut clauses. Parallel SAT workers share the accumulated cuts and deadline. A single directed cycle expands each oriented pair and its stored path before input-graph verification. If the search reaches its round or time limit, the complete general branch continues.

\paragraph{Cubic graphs.}
When every vertex has degree three, let $N(u)$ denote the three neighbors of $u$. Each undirected edge $\{u,v\}$ is represented by two directed variables: $y_{uv}=1$ means that the candidate cycle traverses the edge from $u$ to $v$, and $y_{vu}$ represents the reverse orientation. The constraints
\begin{equation}
 \begin{aligned}
  \sum_{v\in N(u)}y_{uv}&=1 && \text{for every }u\in V,\\
  \sum_{v\in N(u)}y_{vu}&=1 && \text{for every }u\in V,\\
  y_{uv}+y_{vu}&\leq 1 && \text{for every }\{u,v\}\in E.
 \end{aligned}
 \label{eq:cubic}
\end{equation}
assign every vertex exactly one successor and one predecessor, while the final inequality forbids traversing one undirected edge in both directions. A model may still contain several directed cycles. For every proper cycle on vertex set $S$, the solver therefore adds $\bigvee_{u\in S,\,v\notin S}y_{uv}$, requiring the next model to leave $S$. The process ends with one directed Hamiltonian cycle or an UNSAT proof.

\paragraph{General graphs.}
The general encoding introduces one variable per undirected edge and enforces Equation~\eqref{eq:degree}. Direct clauses encode at-most-two constraints for low-degree vertices; larger neighborhoods use a sequential counter. Unit clauses select the virtual edges in $F$. In the optional separation-pair variant, the same engine solves a spanning $s$--$t$ path by adding a fresh degree-two vertex adjacent only to $s$ and $t$ and deleting that vertex from the resulting cycle.

The general CEGAR loop repeatedly calls SAT, decodes the selected edges, and returns when the model is one spanning cycle. Suppose a disconnected model contains a proper cycle $C$, with vertex set $V(C)$ and edge set $E(C)$. The engine adds two valid clauses,
\begin{equation}
 \bigvee_{e\in\delta_H(V(C))}x_e,
 \qquad
 \bigvee_{e\in E(C)}\neg x_e.
 \label{eq:refinement}
\end{equation}
Here, $\delta_H(V(C))$ is the cut of the current reduced graph $H$. The first clause requires a later model to select an edge leaving $V(C)$; the second requires it to break the current internal cycle. CaDiCaL remains incremental across refinements. Before the next SAT call, 2-opt and restricted 3-opt exchanges attempt to merge the cycles in the current model while preserving required virtual edges.

\paragraph{Selection order.}
The solver chooses the first applicable route in Figure~\ref{fig:pipeline}. If the high-degree or paired-chain route finds no cycle, execution proceeds to the general branch. A graph reaching the cubic test uses Equation~\eqref{eq:cubic}; every other graph uses degree-two contraction and the complete undirected encoding. All routes share one deadline, after which the solver returns \textsc{unknown}.

\subsection{Correctness and Complexity}
\label{sec:correctness}

Algorithm~\ref{alg:unified} summarizes \method. The return states distinguish an inapplicable or failed candidate route from a proof of UNSAT. A resource limit produces UNKNOWN.

\begin{algorithm}[t]
  \caption{\method}
  \label{alg:unified}
  \small
  \begin{algorithmic}[1]
    \Require Graph $G$ and deadline $D$
    \Ensure Hamiltonian cycle, \textsc{unsat}, or \textsc{unknown}
    \If{a necessary-condition filter fails}
      \State \Return \textsc{unsat}
    \EndIf
    \State $r\gets\Call{FirstApplicableRoute}{G}$
    \If{$r\in\{\textsc{high-degree},\textsc{paired-chain}\}$}
      \State $z\gets\Call{TryCandidateRoute}{r,G,D}$
      \If{$z\neq\bot$ \textbf{and} $\Call{VerifyCycle}{G,z}$} \State \Return $z$ \EndIf
      \If{$\Call{Now}{}\geq D$} \State \Return \textsc{unknown} \EndIf
    \ElsIf{$r=\textsc{cubic}$}
      \State $z\gets\Call{DirectedCycleCegar}{G,D}$
      \If{$z$ is a cycle \textbf{and} $\neg\Call{VerifyCycle}{G,z}$} \State \Return \textsc{unknown} \EndIf
      \State \Return $z$
    \EndIf
    \State $(H,F,\rho)\gets\Call{ContractDegreeTwoPaths}{G}$
    \State $z\gets\Call{UndirectedCycleCegar}{H,F,D}$
    \If{$z\in\{\textsc{unsat},\textsc{unknown}\}$} \State \Return $z$ \EndIf
    \State $z'\gets\Call{Reconstruct}{z,\rho}$
    \If{$|z'|=|V|$ \textbf{and} $\Call{VerifyCycle}{G,z'}$} \State \Return $z'$ \EndIf
    \State \Return $\Call{OriginalGraphCegarAndVerify}{G,D}$
  \end{algorithmic}
\end{algorithm}

The optional separation-pair variant applies the reduction in Section~\ref{sec:reductions} after degree-two path contraction. Proposition~\ref{prop:twocut} gives the corresponding equivalence. The clauses in Equation~\eqref{eq:refinement} preserve every Hamiltonian cycle.

\begin{proposition}[Specialized branches preserve exactness]
\label{prop:specialized}
A cycle returned by either specialized branch is Hamiltonian in the input graph. If a specialized search finds no cycle, the complete general encoding receives the same instance while time remains.
\end{proposition}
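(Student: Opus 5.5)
The proof is mostly a reading of Algorithm~\ref{alg:unified}, so I would split it into its two claims and trace the control flow for each. For the first claim (soundness of returned cycles), the key point is that neither specialized branch returns anything without passing the final check. In Algorithm~\ref{alg:unified}, when $r\in\{\textsc{high-degree},\textsc{paired-chain}\}$, a candidate $z$ is returned only if $z\neq\bot$ and $\Call{VerifyCycle}{G,z}$ succeeds. I would state explicitly what \textsc{VerifyCycle} checks, following Figure~\ref{fig:pipeline}: $z$ is a cyclic sequence in which every vertex of $V$ appears exactly once, and every consecutive pair, including last-to-first, is an edge of $E$. I would then note that this is precisely the definition of a Hamiltonian cycle in $G$ (with $n\geq 3$, so the sequence is a genuine simple cycle). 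Hence soundness holds independently of how the candidate was built: cluster pairs $p^h_{ab}$, local spanning-path calls, paired-chain orientation, or $\rho$-expansion.

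This independence is worth stressing, because the macro-level encodings are heuristic. For example, in the high-degree route the corridor and bulk assignment need not faithfully capture every Hamiltonian cycle. Soundness therefore must not rest on any equivalence between the macro formula and $G$; the verifier alone carries it.

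For the second claim (completeness preserved by fallback), I would enumerate the exits of the specialized block. These are: $z=\bot$; $z$ present but rejected by the verifier; or the deadline reached. In the first two cases, provided $\Call{Now}{}<D$, control falls through to $\Call{ContractDegreeTwoPaths}{G}$ applied to the original input $G$, not to any macro or paired structure. I would emphasize this, since cuts and blocked pairs learned in the specialized search are never imported into the general formula. The general formula is therefore built exactly as for any other input. Its correctness then follows from three facts:
\begin{itemize}
\item Proposition~\ref{prop:series} gives equivalence of $G$ with $(H,F)$.
\item The refinement clauses in Equation~\eqref{eq:refinement} preserve every Hamiltonian cycle.
\item If reconstruction fails, the final line hands the original graph to $\Call{OriginalGraphCegarAndVerify}{}$.
\end{itemize}
Since each returned state of the general branch is either a verified cycle, \textsc{unsat} justified by these equivalences, or \textsc{unknown}, the combined procedure never reports a wrong answer. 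The only loss caused by a failed specialized search is elapsed time.

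The step I expect to be delicate is the hypothesis of Proposition~\ref{prop:series}: the unique-endpoint-key condition, together with parallel virtual edges. I would handle it by noting that when distinct forced paths share endpoints, the contraction may not be an exact equivalence. The algorithm compensates because \textsc{Reconstruct} is checked by $|z'|=|V|$ and \textsc{VerifyCycle}, and on failure control passes to the uncontracted original-graph CEGAR. An \textsc{unsat} answer from the contracted formula is still sound in that case, because the forward direction of Proposition~\ref{prop:series} (any Hamiltonian cycle of $G$ contracts to a cycle of $H$ containing $F$) does not need uniqueness of keys. Establishing that this direction survives shared endpoint keys, including the triangle case for whole degree-two cycles, is what I would check most carefully.
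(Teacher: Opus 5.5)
Your proposal is correct and matches the paper's proof. Soundness of a returned cycle rests entirely on the input-graph verifier (length $|V|$, every vertex once, an edge of $G$ between consecutive vertices including the closing pair), and the fallback claim is read directly off the control flow of Algorithm~\ref{alg:unified}. Your further discussion of the contracted general branch and shared endpoint keys is not needed for this proposition, since the paper handles that part in the proof of Theorem~\ref{thm:pipeline}.
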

\begin{proof}
Each specialized branch expands its compact solution into an ordered list of original vertices. The verifier checks that the list has length $|V|$, contains every vertex once, and uses an input edge between every consecutive pair, including the closing pair. Algorithm~\ref{alg:unified} sends all other outcomes to complete general CEGAR or returns UNKNOWN at the deadline.
\end{proof}

\begin{theorem}[End-to-end correctness]
\label{thm:pipeline}
Without a resource limit, \method{} returns a Hamiltonian cycle exactly when the input graph is Hamiltonian; otherwise it returns UNSAT. Every returned cycle is valid in the original graph.
\end{theorem}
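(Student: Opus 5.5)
The proof follows Algorithm~\ref{alg:unified} and examines each return statement in turn. For each one we establish two facts. First, soundness of cycles: any returned cycle has passed \textsc{VerifyCycle} on the input graph $G$. Second, soundness of UNSAT: any UNSAT answer is produced by a necessary-condition filter, by the directed cubic CEGAR, or by the undirected general CEGAR, and each of these is justified by an earlier equivalence. Termination, and hence the absence of UNKNOWN, comes from finiteness of the refinement loops. Since there is no resource limit, every deadline test $\Call{Now}{}\geq D$ is false, and the UNKNOWN returns can come only from verifier failures. We must show that these cannot occur.

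\emph{Step 1 (filters).} Each filter tests a necessary condition for Hamiltonicity with $n\geq 3$. A Hamiltonian graph is connected, has minimum degree at least two, and is biconnected, since deleting one vertex from a cycle leaves a path. If it is bipartite, its sides are balanced because the cycle alternates sides. So an UNSAT returned at this point is correct.

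\emph{Step 2 (specialized routes).} Proposition~\ref{prop:specialized} shows that any cycle returned on these routes is Hamiltonian in $G$. Every other outcome falls through to the general branch on the unchanged $G$, so correctness reduces to the general branch.

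\emph{Step 3 (cubic route).} The constraints~\eqref{eq:cubic} are satisfied exactly by the successor functions whose functional graph is a disjoint union of directed cycles of length at least three. Length~$2$ is excluded by $y_{uv}+y_{vu}\leq1$, and each such cycle covers $V$. A directed Hamiltonian cycle of $G$ in either orientation satisfies~\eqref{eq:cubic} and every added cut clause, because the cycle leaves every proper subset $S$. So the refinements never remove a Hamiltonian cycle. Each model with a proper cycle violates its newly added clause, so no model repeats. The model space is finite, so the loop terminates. It ends either with a single spanning directed cycle, which is a Hamiltonian cycle of $G$ and passes the verifier, or with UNSAT, which is then correct.

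\emph{Step 4 (general route).} Proposition~\ref{prop:series} shows that $G$ is Hamiltonian iff $H$ has a Hamiltonian cycle containing $F$. The unit clauses for $F$, the degree formula, and the refinement clauses~\eqref{eq:refinement} are all satisfied by every such cycle: it crosses every cut $\delta_H(V(C))$, and it does not contain the proper cycle $C$. The 2-opt and 3-opt merges only propose candidates and add no clauses, so they cannot cut off solutions. Each disconnected model falsifies its blocking clause, so the same finiteness argument gives termination. UNSAT from this loop is therefore correct. A returned spanning cycle containing $F$ is expanded by $\rho$.

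The main obstacle is this reconstruction. Proposition~\ref{prop:series} assumes unique endpoint keys, and the paper admits that parallel forced paths can make the expansion incomplete. I would handle this case through the final fallback line. If $|z'|\neq|V|$ or the verifier rejects $z'$, the algorithm calls \textsc{OriginalGraphCegarAndVerify} on $G$ itself. That call is the same undirected CEGAR with $F=\emptyset$, so Step~4 with $H=G$ and identity $\rho$ shows it is exact: its output needs no reconstruction, and its UNSAT answer is correct. When endpoint keys are unique, Proposition~\ref{prop:series} guarantees that $z'$ is valid, so the fallback is never needed.

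Combining the cases, every terminating path returns either a verified cycle or an UNSAT answer justified by a sound equivalence. Hence \method{} returns a Hamiltonian cycle exactly when $G$ is Hamiltonian, and returns UNSAT otherwise.
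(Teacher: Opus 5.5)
Your proposal is correct and follows essentially the same route as the paper's proof. Both arguments use the necessary-condition filters, Proposition~\ref{prop:specialized} for the candidate routes, the cubic cycle-cover encoding refined only by clauses valid for every Hamiltonian cycle, Proposition~\ref{prop:series} with the original-graph fallback for endpoint collisions, and finiteness of the assignment space for termination; you simply work through each return statement in more detail. One point to tighten, which the paper treats equally briefly: Algorithm~\ref{alg:unified} returns an UNSAT from the contracted formula even when endpoint keys collide, so you should note that the direction ``$G$ Hamiltonian implies $H$ has a Hamiltonian cycle containing $F$'' in the proof of Proposition~\ref{prop:series} never uses the unique-key hypothesis.
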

\begin{proof}
The initial filters test necessary conditions, and Proposition~\ref{prop:specialized} covers the candidate routes. The directed cubic formula is a cycle-cover encoding refined only by clauses valid for every directed Hamiltonian cycle. In the general branch, Proposition~\ref{prop:series} establishes equivalence for unique endpoint keys. Endpoint collisions either yield a verified reconstruction or redirect the search to the original graph. Each CEGAR refinement preserves every Hamiltonian cycle. Because each complete formula has finitely many assignments, an unrestricted run terminates with a verified cycle or an UNSAT proof.
\end{proof}

For a graph with $m$ edges, the undirected engine starts with $m$ primary variables. Direct at-most-two clauses require $O(d^3)$ clauses at a low-degree vertex of degree $d$; the sequential encoding used for larger degrees is linear in $d$. If the high-degree partition has macro-edge set $E_M$ and port sets $B_h$, its macro formula has $|E_M|+\sum_h\binom{|B_h|}{2}$ primary variables, in addition to the local path calls. With $q$ paired chains and arc set $A$, the paired formula has $|A|$ variables; the cubic formula has $2m$ directed variables. Degree-two contraction and reconstruction are linear in $n+m$. The optional separation-pair scan is $O(n(n+m))$ in the worst case. SAT search and the number of refinements remain exponential in the worst case.

\section{Evaluation}
\label{sec:experiments}

We measure standalone and complementary coverage against recent SAT--CEGAR solvers, isolate the contribution of each structural choice, and test the cubic formulation on both Hamiltonian and non-Hamiltonian graphs.

\subsection{Benchmarks and Compared Configurations}
\label{sec:setup}

\method{} is implemented with RustSAT and CaDiCaL~\cite{jabs2025rustsat,biere2024cadical}.

The FHCP Challenge Set contains 1,001 known-Hamiltonian graphs~\cite{haythorpe2018fhcp}. The matched ablations ran on Google Cloud c4-highcpu-8 Linux instances with 8 logical CPUs, 16~GB RAM, and Ubuntu 22.04. We ran one solver process at a time with at most three internal workers. Each configuration used seed 1, a 1,800-second wall-time cutoff, and a 7,168-MB memory limit enforced by Runlim 2.0.0rc14. The two recognizer thresholds were fixed before the final runs. A run counted as solved only after independent tour verification.

\paragraph{Metrics.}
We summarize verified coverage, penalized runtime, solved-run resource use, and complementary coverage. Let $\mathcal I$ be the $N=1{,}001$ instances, $T=1{,}800$ seconds, and $S_s$ the instances solved by method $s$. Its penalized cost is
\[
  c_{s,i}=\begin{cases}
    t_{s,i}, & i\in S_s,\\
    2T, & i\notin S_s,
  \end{cases}
  \qquad
  \operatorname{PAR2}(s)=\frac{1}{N}\sum_{i\in\mathcal I}c_{s,i}.
\]
Thus every timeout, memory-out, or unverified outcome contributes $2T=3{,}600$ seconds. PAR-2 combines coverage and runtime; solved-run medians and interquartile ranges describe runtime conditional on success.

For comparison set $C$, the union is $U_C=\bigcup_{r\in C}S_r$ and method $s$ contributes $|S_s\setminus U_{C\setminus\{s\}}|$ exclusive solves. Relative to a prior union $U_P$, its additional coverage is $|S_s\setminus U_P|$. Cactus endpoints equal solved counts. All aggregates use unrounded per-instance records.

We compare solved-set coverage with archived reruns of two recent methods. HCup's outside-mergecutset configuration merges adjacent subcycles and adds cut-set clauses~\cite{funder2026hcup}. Ohashi et al.'s Cut-set CEGAR extends the earlier HCP CEGAR loop with cut-set refinement. Its \texttt{all} configuration also enables two-cycle prevention, orientation symmetry breaking, and subcycle merging~\cite{ohashi2025cutset}. The selected runs each solve 937 instances and form a 950-instance union. Because the archives use different runtime protocols, we use their verified solved sets to measure complementary coverage. Available timing profiles are reported as contextual information.

The published SAT-based results of Ohashi et al.~\cite{ohashi2025cutset} and Funder~\cite{funder2026hcup} solve 972 distinct instances. Treewidth-based dynamic programming~\cite{ziobro2018treewidth} adds six, bringing the union of published exact results to 978 instances.

We also use the labelled cubic-graph census from the Flinders graph database~\cite{fhcpGraphDatabase}. It contains all 1,927 non-Hamiltonian graphs of orders 10--18. From the Hamiltonian graphs we draw, independently at each order, a sample matching the number of biconnected negatives, using sampling seed 2027; the solver uses seed 1.

\paragraph{Reproducibility artifact.}
The anonymized supplementary artifact provides the filtered per-instance records underlying the reported configurations, together with a script that checks completeness and recomputes the principal aggregate outcomes.

\subsection{Comparison with Prior Solvers}
\label{sec:results}

Table~\ref{tab:full-coverage}(a) compares the three methods by coverage, failure mode, and exclusive contribution to their union. Panel (b) reports the runtime profile for the two methods with complete per-instance timing data.

\begin{table}[H]
  \caption{FHCP comparison. Panel (a) reports run outcomes and exclusive coverage in the three-method union; TO, MO, and E denote timeout, memory-out, and other error. Panel (b) reports the two archives with per-instance timing.}
  \label{tab:full-coverage}
  \centering\footnotesize
  \setlength{\tabcolsep}{3pt}
  \textbf{(a) Outcomes and complementary coverage}\par\smallskip
  \begin{tabular*}{\columnwidth}{@{\extracolsep{\fill}}p{.48\columnwidth}rrr@{}}\toprule
    Method & Solved & Failures & Excl.\\ \midrule
    \method{} (ours) & 826 & 30 TO, 145 MO & \textbf{34}\\
    HCup (\texttt{outside-mergecutset}) & \textbf{937} & 63 TO, 1 E & 5\\
    Cut-set CEGAR (\texttt{all}) & \textbf{937} & 64 TO & 1\\
    \bottomrule
  \end{tabular*}
  \par\medskip
  \textbf{(b) Runtime profile}\par\smallskip
  \begin{tabular*}{\columnwidth}{@{\extracolsep{\fill}}p{.30\columnwidth}p{.31\columnwidth}rrr@{}}\toprule
    Method & $t_{50}$ [IQR] & $\leq10$ s & $\leq100$ s & PAR-2\\ \midrule
    \method{} & 2.69 [0.37, 10.25] & \textbf{617} & 785 & 649.4\\
    Cut-set CEGAR (\texttt{all}) & 4.24 [0.37, 38.12] & 563 & \textbf{798} & \textbf{288.4}\\
    \bottomrule
  \end{tabular*}
\end{table}

HCup and Cut-set CEGAR each solve 937 instances. Their union covers 950 and leaves 51 common misses. \method{} solves 34 of these, raising the three-method union to 984 (98.3\%); the two comparators contribute six exclusive solves in the same union.

In the archived timing traces, \method{} has 617 verified cycles by 10 seconds versus 563. Cut-set CEGAR leads by 100 seconds, solves 111 more instances overall, and attains the lower PAR-2. The solved sets remain complementary: \method{} contributes 34 independently verified cycles outside the two-solver union.

\paragraph{Where the additional coverage occurs.}
Figure~\ref{fig:outcome-profile}(a) groups the full benchmark by average degree, showing that \method{} solves all 160 intermediate-density graphs and 84 of 92 dense graphs. Among the 51 common misses in Figure~\ref{fig:outcome-profile}(b), it solves all 22 dense graphs and 12 of 29 sparse graphs; all 17 three-way misses are sparse. The additional coverage is therefore concentrated in the dense cohort recognized by the high-degree route.

\begin{figure*}[t]
  \centering
  \begin{minipage}{0.49\textwidth}
    \centering
    \begin{tikzpicture}[font=\scriptsize]
      \node[anchor=west,font=\small\bfseries] at (-1.55,1.85)
        {(a) \method{} outcomes};

      \node[anchor=east] at (-0.08,1.25) {Sparse ($n=749$)};
      \fill[black!72] (0,1.05) rectangle (4.12,1.45);
      \fill[black!35] (4.12,1.05) rectangle (4.28,1.45);
      \fill[pattern=north east lines] (4.28,1.05) rectangle (5.30,1.45);
      \draw (0,1.05) rectangle (5.30,1.45);
      \node[text=white] at (2.06,1.25) {SAT 582};
      \node[fill=white,inner sep=1.2pt,rounded corners=1pt] at (4.79,1.25) {MEM 145};
      \draw[-] (4.20,1.45) -- (4.20,1.60);
      \node[anchor=south,inner sep=0.5pt] at (4.20,1.58) {22};

      \node[anchor=east] at (-0.08,0.65) {Intermediate ($n=160$)};
      \fill[black!72] (0,0.45) rectangle (5.30,0.85);
      \draw (0,0.45) rectangle (5.30,0.85);
      \node[text=white] at (2.65,0.65) {SAT 160};

      \node[anchor=east] at (-0.08,0.05) {Dense ($n=92$)};
      \fill[black!72] (0,-0.15) rectangle (4.84,0.25);
      \fill[black!35] (4.84,-0.15) rectangle (5.30,0.25);
      \draw (0,-0.15) rectangle (5.30,0.25);
      \node[text=white] at (2.42,0.05) {SAT 84};
      \node at (5.07,0.05) {8};

      \fill[black!72] (0,-0.62) rectangle (0.24,-0.38);
      \node[anchor=west] at (0.30,-0.50) {SAT};
      \fill[black!35] (1.15,-0.62) rectangle (1.39,-0.38);
      \node[anchor=west] at (1.45,-0.50) {timeout};
      \fill[pattern=north east lines] (2.65,-0.62) rectangle (2.89,-0.38);
      \draw (2.65,-0.62) rectangle (2.89,-0.38);
      \node[anchor=west] at (2.95,-0.50) {memory limit};
    \end{tikzpicture}
  \end{minipage}\hfill
  \begin{minipage}{0.49\textwidth}
    \centering
    \begin{tikzpicture}[font=\scriptsize]
      \node[anchor=west,font=\small\bfseries] at (-1.45,1.85)
        {(b) Graphs missed by both comparison methods};

      \node[anchor=east] at (-0.08,1.25) {Sparse ($n=29$)};
      \fill[black!72] (0,1.05) rectangle (2.19,1.45);
      \fill[pattern=north east lines] (2.19,1.05) rectangle (5.30,1.45);
      \draw (0,1.05) rectangle (5.30,1.45);
      \node[text=white] at (1.095,1.25) {solved 12};
      \node[fill=white,inner sep=1.2pt,rounded corners=1pt] at (3.745,1.25) {unsolved 17};

      \node[anchor=east] at (-0.08,0.05) {Dense ($n=22$)};
      \fill[black!72] (0,-0.15) rectangle (5.30,0.25);
      \draw (0,-0.15) rectangle (5.30,0.25);
      \node[text=white] at (2.65,0.05) {solved 22};

      \fill[black!72] (0,-0.62) rectangle (0.24,-0.38);
      \node[anchor=west] at (0.30,-0.50) {solved by \method};
      \fill[pattern=north east lines] (2.75,-0.62) rectangle (2.99,-0.38);
      \draw (2.75,-0.62) rectangle (2.99,-0.38);
      \node[anchor=west] at (3.05,-0.50) {missed by all three};
    \end{tikzpicture}
  \end{minipage}
  \caption{Descriptive outcome profile by average degree $\bar d=2|E|/|V|$: sparse ($\bar d<4$), intermediate ($4\leq\bar d<6$), and dense ($\bar d\geq6$). Bars are normalized within each row; labels give instance counts.}
  \label{fig:outcome-profile}
  \Description{Two normalized bar charts. The first shows SAT, timeout, and memory-limit counts across three average-degree bands. The second shows solved and unsolved instances among graphs missed by both comparison methods.}
\end{figure*}

\paragraph{Broader published coverage.}
Table~\ref{tab:cohorts} compares \method{} with the per-instance unions available from prior work.

\begin{table}[t]
  \caption{Additional FHCP coverage supplied by \method{} beyond increasingly broad prior result sets. Published rows aggregate the cited per-instance result tables.}
  \label{tab:cohorts}
  \centering\small
  \begin{tabular*}{\columnwidth}{@{\extracolsep{\fill}}lrrr@{}}\toprule
    Comparison set & Prior union & Added & Final union\\ \midrule
    Two rerun configurations & 950 & \textbf{34} & \textbf{984}\\
    Published SAT results & 972 & \textbf{12} & \textbf{984}\\
    Published exact results & 978 & \textbf{6} & \textbf{984}\\
    \bottomrule
  \end{tabular*}
\end{table}

Twelve solutions lie outside the union of published per-instance SAT results included in our analysis. Treewidth dynamic programming covers six of these, while the other six also lie outside the published exact result sets in our comparison. Of the 29 graphs outside the published SAT union, \method{} solves all six dense graphs and six of 23 sparse graphs. The 17 unresolved graphs have average degree 3.16--3.42. Table~\ref{tab:residual-cohort} in Appendix~\ref{app:residual-cohort} reports these 29 outcomes instance by instance.

Having established the external coverage, we next isolate which structural choices produce it (Table~\ref{tab:configurations}).

\begin{table*}[t]
  \caption{FHCP ablation. $t_{50}$ [IQR] is solved-run time (s), $M_{50}/M_{95}$ is solved-run peak memory (MB), and TO/MO denote timeout/memory-out. Panel (b) gives net stepwise differences; panel (c) gives paired solved-set differences.}
  \label{tab:configurations}
  \centering\footnotesize
  \textbf{(a) Components and aggregate outcomes}\par\smallskip
  \setlength{\tabcolsep}{3.8pt}
  \begin{tabular*}{\textwidth}{@{\extracolsep{\fill}}p{.18\textwidth}cccrrrp{.15\textwidth}p{.08\textwidth}r@{}}\toprule
    & \multicolumn{3}{c}{Structural components} & \multicolumn{6}{c}{Outcomes and resources}\\
    \cmidrule(lr){2-4}\cmidrule(l){5-10}
    Configuration & \shortstack{Formulation\\selection} & \shortstack{Degree-2\\contraction} & \shortstack{Sep.-pair\\reduction} & Solved & TO & MO & $t_{50}$ [IQR] & $M_{50}/M_{95}$ & PAR-2\\ \midrule
    \method{} (main) & on & on & off & \textbf{826} & 30 & 145 & 2.69 [0.37, 10.25] & 36/215 & \textbf{649.4}\\
    No formulation selection & off & on & off & 744 & 108 & 149 & 4.08 [0.76, 13.18] & 39/409 & 956.1\\
    Unreduced-graph ablation & off & off & off & 741 & 105 & 155 & 3.32 [0.33, 12.62] & 40/359 & 967.4\\
    \method{} + separation-pair reduction & on & on & on & 818 & 29 & 154 & 2.45 [0.52, 11.13] & 21/211 & 680.8\\
    \bottomrule
  \end{tabular*}
  \par\medskip
  \textbf{(b) Stepwise localization by graph family}\par\smallskip
  \begin{tabular*}{\textwidth}{@{\extracolsep{\fill}}lrrrrrrr@{}}\toprule
    Family & $n$ & \shortstack{\method\\solved} & \shortstack{No selection\\solved} & \shortstack{Unreduced graph\\solved} & $\Delta_{\mathrm{sel}}$ & $\Delta_{\mathrm{ctr}}$ & \shortstack{Main/no selection\\PAR-2}\\ \midrule
    High-degree cluster & 69 & 65 & 6 & 7 & $+59$ & $-1$ & 247.8/3291.5\\
    Paired chain & 30 & 27 & 27 & 27 & $0$ & $0$ & 361.2/361.0\\
    Cubic & 88 & 88 & 66 & 67 & $+22$ & $-1$ & 0.1/1042.6\\
    General & 814 & 646 & 645 & 640 & $+1$ & $+5$ & 764.2/770.7\\
    \bottomrule
  \end{tabular*}
  \par\medskip
  \textbf{(c) Paired solved-set differences}\par\smallskip
  \begin{tabular*}{\textwidth}{@{\extracolsep{\fill}}lrrrr@{}}\toprule
    Contrast (A versus B) & Both solved & A only & B only & Net $A-B$\\ \midrule
    \method{} vs. no formulation selection & 736 & 90 & 8 & $+82$\\
    No formulation selection vs. unreduced graph & 732 & 12 & 9 & $+3$\\
    \method{} vs. + separation pair & 815 & 11 & 3 & $+8$\\
    \bottomrule
  \end{tabular*}
\end{table*}

\subsection{Ablation: What Drives \method?}
\label{sec:ablation}

The ablation removes one structural choice at a time. The main configuration selects a formulation by graph family and contracts maximal degree-two paths in its general branch. \emph{No formulation selection} sends every graph to the general undirected encoding but retains degree-two contraction. The \emph{unreduced-graph ablation} goes one step further: it applies the same general SAT--CEGAR engine directly to every input graph, without formulation selection or degree-two contraction. Finally, \emph{\method{} + separation-pair reduction} adds that optional reduction to the main method. Table~\ref{tab:configurations} therefore separates the effects of formulation selection, degree-two contraction, and separation-pair reduction.

\paragraph{Formulation selection is the main source of improvement.}
With degree-two contraction fixed in both configurations, formulation selection raises coverage from 744 to 826, a net 82 instances (8.2 percentage points), and lowers PAR-2 from 956.1 to 649.4. \method{} solves 90 instances missed without selection, while the no-selection configuration solves eight instances missed by \method{}. By 10 seconds, the two configurations solve 617 and 529 instances; by 100 seconds, they solve 785 and 694. On their 736 shared solves, \method{} is faster on 502 and slower on 162.

\paragraph{The gain is concentrated in the specialized formulations.}
Table~\ref{tab:configurations}(b) shows net differences of 59 solves in the high-degree-cluster family and 22 in the cubic family. These families account for 81 of the observed 82-solve difference. Paired-chain coverage is unchanged, while the general family differs by one solve. Among solved runs, the median maximum CEGAR iteration count falls from 130 without selection to 71.5 with \method{}. This reduction suggests that the specialized representations expose useful global choices earlier in the search.

Of the twelve cycles added to the union of published SAT results, the no-selection configuration retains five. The configuration with formulation selection solves the remaining seven, including all six dense instances handled by the high-degree-cluster route.

\paragraph{Degree-two contraction affects the failure tail.}
Comparing \emph{No formulation selection} with the \emph{unreduced-graph ablation} isolates the effect of degree-two contraction. Contraction changes coverage from 741 to 744: twelve instances become solved and nine are lost. The five-solve gain in the general family is offset by one loss in each of the high-degree and cubic cohorts. PAR-2 falls from 967.4 to 956.1, although contraction is slower on 492 of the 732 jointly solved instances and faster on 206. Its aggregate benefit therefore comes from a small number of cutoff outcomes rather than a uniform speedup.

\paragraph{Separation pairs.}
Separation-pair reduction reaches three new graphs but loses eleven, including seven at the memory limit and four at the time limit. Coverage falls from 826 to 818 and PAR-2 rises from 649.4 to 680.8. These failures reveal a memory cost that solved-run summaries alone do not capture. The main configuration therefore leaves this reduction disabled. The per-instance runtime and memory records are provided in the supplementary artifact.

\subsection{Validation on Hamiltonian and Non-Hamiltonian Cubic Graphs}
\label{sec:cubic-results}

The FHCP Challenge Set contains known-Hamiltonian graphs, so it exercises verified cycle construction but not an UNSAT conclusion. We therefore evaluated the directed cubic branch on the labelled cubic census from the Flinders graph database. The experiment includes all 1,927 non-Hamiltonian graphs of orders 10--18. At each order, we drew a Hamiltonian sample equal in size to the biconnected non-Hamiltonian subset, giving 272 positive instances. Every graph received a five-second cutoff.

Across the 2,199 graphs in Table~\ref{tab:cubic-results}, \method{} returns the correct decision for every instance. The articulation filter proves 1,655 of the 1,927 negative cases (85.9\%) before SAT solving. Directed cubic SAT--CEGAR decides the remaining 544 biconnected graphs, proving 272 cases UNSAT and constructing a verified cycle for all 272 matched positive cases.

Order 18 contains most of the nontrivial cases: after filtering, 231 biconnected negatives and 231 matched positives reach SAT--CEGAR. Correct decisions exercise both cycle construction and refutation. Together with all 88 cubic FHCP instances solved, this experiment validates the complete cubic branch on both positive and negative inputs.

\begin{table}[t]
  \caption{Cubic census by order. ``Artic. filter'' counts non-Hamiltonian graphs proved immediately by an articulation point; ``Biconn.'' counts the remainder passed to directed SAT--CEGAR. The Hamiltonian sample has equal size.}
  \label{tab:cubic-results}
  \centering\small
  \begin{tabular*}{\columnwidth}{@{\extracolsep{\fill}}lrrrr@{}}\toprule
    $|V|$ & Non-Ham. & Artic. filter & Biconn. & Ham. sample\\ \midrule
    10 & 2 & 1 & 1 & 1\\
    12 & 5 & 4 & 1 & 1\\
    14 & 35 & 29 & 6 & 6\\
    16 & 219 & 186 & 33 & 33\\
    18 & 1,666 & 1,435 & 231 & 231\\ \midrule
    Total & 1,927 & \textbf{1,655} & 272 & 272\\
    \bottomrule
  \end{tabular*}
\end{table}

\section{Discussion}
\label{sec:discussion}

The ablation identifies formulation selection as the main source of improvement on FHCP. High-degree-cluster and cubic graphs account for 81 of the 82 additional solves over the contracted general encoding. Both formulations replace many edge-level choices with a smaller set of global decisions: connections between clusters in one case, and one predecessor and successor per vertex in the other. Their lower CEGAR iteration counts support the view that a suitable representation can expose connectivity earlier in the search.

This result complements recent SAT-based HCP solvers that strengthen connectivity reasoning while encoding the input graph directly~\cite{ohashi2025cutset,fazekas2024userpropagators,funder2026hcup}. \method{} instead changes the search representation before refinement. Although HCup and Cut-set CEGAR solve more FHCP instances overall, \method{} contributes 34 cycles outside their union, including all 22 dense common misses. The three approaches therefore cover different parts of the benchmark.

Not every structural abstraction is useful. Paired-chain selection leaves coverage unchanged, degree-two contraction gives only a small net gain, and separation-pair reduction loses eight instances overall. The last result reflects the cost of additional local path searches and motivates disabling this reduction by default. More selective recognizers are needed before these components can contribute consistently.

The evaluation has three main boundaries. The recognizer thresholds were developed on FHCP, the ablations use one solver seed, and the non-Hamiltonian cubic census contains only small graphs. The present evidence therefore supports conclusions about FHCP coverage, while broader selector performance and difficult UNSAT behavior remain open. Evaluation on unseen benchmarks and multiple seeds is the next step. The 17 graphs missed by all three FHCP solvers are sparse and non-cubic, providing a concrete target for extending the representation portfolio.

\section{Conclusion}
\label{sec:conclusion}

\method{} is an exact solver that chooses a SAT representation from graph structure. It uses specialized formulations for recurring graph families and a complete general SAT--CEGAR fallback. Any cycle found after graph reduction is expanded and checked in the original input graph. On FHCP, most of the improvement from formulation selection comes from the high-degree-cluster and cubic formulations. \method{} also solves graphs missed by HCup and Cut-set CEGAR, showing that the methods have complementary strengths. The cubic census further confirms that the complete cubic branch handles both Hamiltonian and non-Hamiltonian graphs correctly. Together, these results show that choosing the representation before refinement can broaden the reach of exact SAT solving on structured HCP instances.

\appendix
\section{Coverage beyond Published SAT Results}
\label{app:residual-cohort}

The union of the per-instance SAT results reported by Ohashi et al.~\cite{ohashi2025cutset} and Funder~\cite{funder2026hcup} covers 972 of the 1,001 FHCP instances. Under the density groups in Figure~\ref{fig:outcome-profile}, the remaining 29 comprise 23 sparse and six dense graphs. \method{} finds verified Hamiltonian cycles for all six dense graphs and six of the sparse graphs.

Treewidth dynamic programming already covers the six dense graphs. \method{} also solves these six cases and adds six sparse instances beyond all published exact per-instance result sets available for FHCP, increasing their union from 978 to 984 instances. The other 17 instances are all sparse, with average degree between 3.16 and 3.42. Thus, within this residual cohort, every unresolved instance is sparse and non-cubic.

\begin{table}[H]
  \caption{Per-instance outcomes for the 29 FHCP graphs outside the published SAT union. Runtime and peak memory are reported for verified SAT outcomes.}
  \label{tab:residual-cohort}
  \centering\scriptsize
  \setlength{\tabcolsep}{2.25pt}
  \renewcommand{\arraystretch}{0.88}
  \begin{tabular*}{\columnwidth}{@{\extracolsep{\fill}}rrrrrrr@{}}\toprule
    ID & $|V|$ & $\bar d$ & Prior exact & Outcome & $t$ (s) & $M_{\mathrm{peak}}$ (MB)\\ \midrule
    710 & 4,064 & 3.35 & -- & \textbf{SAT} & 232.79 & 75\\
    717 & 4,122 & 3.71 & -- & \textbf{SAT} & 174.22 & 57\\
    746 & 4,286 & 8.53 & TW-DP & \textbf{SAT} & 50.06 & 114\\
    788 & 4,620 & 3.27 & -- & TO & -- & --\\
    832 & 5,070 & 3.15 & -- & \textbf{SAT} & 407.66 & 148\\
    868 & 5,544 & 3.27 & -- & TO & -- & --\\
    882 & 5,686 & 3.27 & -- & \textbf{SAT} & 30.27 & 80\\
    937 & 6,412 & 3.36 & -- & TO & -- & --\\
    944 & 6,544 & 3.34 & -- & MO & -- & --\\
    950 & 6,620 & 8.68 & TW-DP & \textbf{SAT} & 61.11 & 103\\
    951 & 6,630 & 3.19 & -- & TO & -- & --\\
    954 & 6,735 & 3.42 & -- & TO & -- & --\\
    959 & 6,925 & 3.39 & -- & \textbf{SAT} & 535.92 & 358\\
    960 & 6,930 & 3.27 & -- & TO & -- & --\\
    963 & 7,020 & 8.69 & TW-DP & \textbf{SAT} & 137.27 & 107\\
    965 & 7,102 & 3.29 & -- & TO & -- & --\\
    966 & 7,104 & 3.36 & -- & TO & -- & --\\
    971 & 7,295 & 3.26 & -- & \textbf{SAT} & 338.23 & 240\\
    974 & 7,418 & 3.20 & -- & TO & -- & --\\
    975 & 7,420 & 8.71 & TW-DP & \textbf{SAT} & 356.97 & 171\\
    976 & 7,434 & 3.20 & -- & TO & -- & --\\
    981 & 7,620 & 3.22 & -- & TO & -- & --\\
    982 & 7,620 & 8.72 & TW-DP & \textbf{SAT} & 280.25 & 166\\
    983 & 7,650 & 3.20 & -- & TO & -- & --\\
    987 & 7,850 & 3.16 & -- & TO & -- & --\\
    990 & 8,020 & 8.73 & TW-DP & \textbf{SAT} & 260.55 & 193\\
    993 & 8,380 & 3.18 & -- & TO & -- & --\\
    994 & 8,401 & 3.29 & -- & TO & -- & --\\
    998 & 8,613 & 3.33 & -- & TO & -- & --\\
    \bottomrule
  \end{tabular*}
  \par\smallskip
  \parbox{\columnwidth}{\scriptsize\emph{Notes.} ID $k$ denotes graph $k$, and $\bar d$ is average degree. Sparse graphs have $\bar d<4$; dense graphs have $\bar d\geq6$. ``Prior exact'' records additional published exact per-instance coverage: TW-DP denotes treewidth dynamic programming, and -- means that no prior exact result is reported. TO and MO denote timeout and memory-out.}
\end{table}

\balance
\bibliographystyle{ACM-Reference-Format}
\bibliography{references}
\end{document}